\documentclass[11pt]{article}
\usepackage{fullpage}
\usepackage{amsfonts, amsmath, amssymb, amsthm}

\newcommand {\ONE}      {\text{\textbf{1}}}

\newcommand {\Exp}       {\mathbb{E}}
\newcommand {\Prob}  [1] {\Pr \Big(#1 \Big)}
\newcommand {\E}     [1] {\Exp\Big[#1\Big]}

\theoremstyle{plain}
\newtheorem{theorem}{Theorem}
\newtheorem{corollary}[theorem]{Corollary}
\newtheorem{lemma}[theorem]{Lemma}

\theoremstyle{remark}
\newtheorem{remark}[theorem]{Remark}

\theoremstyle{definition}
\newtheorem{definition}[theorem]{Definition}

\def\o1{OPT_1}

\begin{document}
\title{On Parsimonious Explanations for\\ 2-D Tree- and Linearly-Ordered Data}
\author 
{
Howard Karloff\thanks
{AT\&T Labs -- Research, 180 Park Avenue, Florham Park, NJ 07932. 
E-mail: {\tt howard@research.att.com}.}
\and
Flip Korn\thanks
{AT\&T Labs -- Research, 180 Park Avenue, Florham Park, NJ 07932. 
E-mail: {\tt flip@research.att.com}.}
\and Konstantin Makarychev\thanks
{IBM Research, Box 218, Yorktown Heights, NY 10598.  
E-mail: {\tt konstantin@us.ibm.com}.}
\and
Yuval Rabani\thanks
{The Rachel and Selim Benin School of Computer 
Science and Engineering, The Hebrew University of Jerusalem,
Jerusalem 91904, Israel.
E-mail: {\tt yrabani@cs.huji.ac.il}.}
}

\date{$ $}
\maketitle
\begin{abstract}

This paper studies the ``explanation problem''
for tree- and linearly-ordered array data,
a problem motivated by database applications
and recently solved for the one-dimensional tree-ordered case.
In this paper, one is given a matrix $A=(a_{ij})$ whose
rows and columns have semantics:
special subsets of the rows and special subsets of the columns
are meaningful, others are not.
A submatrix in $A$ is said to be meaningful if and only if
it is the cross product of a meaningful row subset and a
meaningful column subset, in which case we call it an ``allowed rectangle.''
The goal is to ``explain'' $A$ as a sparse sum of
weighted allowed rectangles.
Specifically, we wish to find as few weighted allowed rectangles
as possible such that, for all $i,j$,
$a_{ij}$ equals the sum of the weights of all rectangles
which include cell $(i,j)$.

In this paper we consider the natural cases in which
the matrix dimensions are tree-ordered or linearly-ordered.
In the tree-ordered case, we are given a rooted tree $T_1$
whose leaves are the rows of $A$ and another, $T_2$,
whose leaves are the columns.  Nodes of the trees
correspond in an obvious way to the sets of their leaf descendants.
In the linearly-ordered case, a set of rows or columns is meaningful
if and only if it is contiguous.

For tree-ordered data, we prove the explanation problem NP-Hard 
and give a randomized $2$-approximation algorithm for it.
For linearly-ordered data, we prove the explanation problem NP-Hard
and give a $2.56$-approximation algorithm.
To our knowledge, these are the first results for the problem
of sparsely and exactly representing matrices by weighted rectangles.
\end{abstract}

\section{Introduction}
This paper studies two related problems of ``explaining" data parsimoniously.
In the first part of this paper,
we focus on providing a top-down ``hierarchical explanation''
of ``tree-ordered'' matrix data.
We motivate the problem as follows.
Suppose that one is given a matrix $A=(a_{ij})$ of data,
and that the rows naturally correspond to the leaves of a rooted tree $T_1$,
and the columns, to the leaves of a rooted tree $T_2$.  
For example, $T_1$ and $T_2$ could represent hierarchical IP addresses
spaces with nodes corresponding to IP prefixes.
Each node of either $T_1$ or $T_2$ is then said to correspond
to the set of rows (or columns, respectively)
corresponding to its leaf descendants.
Say {\tt 128.*} (i.e., the set of all $2^{24}$ 
IP addresses beginning with ``128'', which happens to correspond to
the {\tt .edu} domain) is a node in $T_1$ and {\tt 209.85.225.*} 
(i.e., the set of all $2^8$ IP addresses beginning with {\tt
209.85.225},
which is {\tt www.google.com}'s domain) is a node in $T_2$.
Then {\tt (128.*, 209.85.225.*)} could, say, represent the amount of traffic
flowing from
all hosts in the {\tt .edu} domain (e.g., {\tt 128.8.127.3}) to all hosts in the {\tt www.google.com}
domain (e.g., {\tt 209.85.225.99}).
It is easy to relabel the rows or columns so that each internal node 
of $T_1$ or $T_2$ corresponds to a contiguous block of rows or columns.

We need a few definitions.
Let us say a {\em rectangle} in an $m\times n$ matrix $A$ is a set 
$Rect(i_1,i_2,j_1,j_2)=\{i:i_1\le i \le  i_2\}\times \{j:j_1\le j\le j_2\}$,
for some $1\le i_1\le i_2\le m$, $1\le j_1\le j_2\le n$.  Certain rectangles are {\em
allowed};  others are not.  Let $\cal R$ denote the set of allowed rectangles.
Say a set of $w(R)$-weighted rectangles $R$ {\em represents $A=(a_{ij})$} if 
for any cell $(i,j)$, the sum of $w(R)$ over cells that contain $(i,j)$ is 
$a_{ij}$.

Returning to the Internet example, 
a pair $(u,v)$, $u$ a node of $T_1$, $v$ a node of $T_2$, corresponds to a 
rectangle.   Say that a rectangle is allowed, relative to $T_1$ and $T_2$, if 
it is the cross product of the set of rows corresponding to some node $u$ in $T_1$
and the set of columns corresponding to some node $v$ in $T_2$.  In this scenario, we attempt to ``explain'' or ``describe'' the matrix 
by writing it as a sum of weighted allowed rectangles. 
Formally, we wish to assign a weight $w_R$ to each allowed rectangle $R$ such that the set of weighted rectangles represents $A$.

Of course there is always a solution: one can just assign weights to the $1 \times 1$ rectangles.
But this is a trivial description of the matrix.  Usually more concise explanations are preferable.
For this reason we seek an ``explanation'' with as few nonzero terms as possible.  More precisely, 
we seek to assign a weight $w_R$ to each allowed rectangle $R$ such that the set of weighted rectangles represents $A$, and
such that the number of nonzero weights $w_R$ assigned is minimized.
(We define problems formally in Section \ref{dande}.)

Here is a 1-dimensional example.
Suppose that a media retailer sells items in exactly four categories:  action-movie DVD's, 
comedy DVD's, books, and CD's.  The retailer builds a
hierarchy with four leaves, one for each of the categories of items.  A node ``DVD's" is the parent of leaves ``action-movie DVD's" and ``comedy
DVD's''.  There is one more node, a root labeled ``all'', with children ``DVD's", ``books'', and ``CD's''.  

Suppose that one year, sales of action-movie DVD's increased by \$6000 and sales of the other three categories increased by \$8000 each.
One could represent the sales data by giving those four numbers, one for each leaf of the hierarchy,  
yet one could more parsimoniously say that there was a general increase
of \$8000 for all (leaf) categories, in addition to which there was a decrease of \$2000 for action-movie DVD's.  This is represented by assigning
\$8000 to node ``all'' and \$-2000 to ``action-movie DVD's''.
While many different linear combinations may be possible,
simple explanations tend to be most informative.
Therefore, we seek an answer minimizing the explanation size
(the number of nonzero terms required in the explanation).

Here is a definition of {\sc Tree$\times$Tree}.
An instance 
consists of an $m\times n$ matrix $A=(a_{ij})$, along
with two rooted trees, a tree $T_1$  whose leaf set is the set of rows of
the matrix, and a tree $T_2$ whose leaf set is the set of
columns.  
Let $L_i(v)$ be the leaf descendants of node $v$ in tree $T_i$,
$i \in \{1,2\}$.  
Now $\cal R$ is just the set $\{L_1(u) \times L_2(v) \ : \ \mbox{$u$
is a node in $T_1$ and $v$ is a node in $T_2$}\}$.
The goal is to find the smallest set of weighted rectangles which represents $A$.
We prove this problem NP-hard and give a randomized 2-approximation
algorithm for it.  APX-hardness is not known.

The second problem, {\sc AllRects}, is motivated by the need to concisely describe or explain linearly-ordered data.  
Imagine that one has two ordered parameters,
such as horizontal and vertical location, or age and salary.
No trees are involved now.   Instead we allow any interval of rows
(i.e., $\{i: i_1\le i\le i_2\}$ for any $1\le i_1\le i_2\le m$) 
and any interval of columns
(i.e., $\{j: j_1\le j\le j_2\}$ for any $1\le j_1\le j_2\le n$).
For example, $[800,1000]\times [500,1500]$ could be used to
represent a geographical region extending eastward from 800 to 1000 miles
and northward from 500 to 1500 miles, and
$[35.0,45.0]\times [80000,95000]$ could be used to represent the
subset of people 35-44 years old and earning a salary of \$80000-\$95000.
Then we can use the former ``rectangles'' to summarize the change
(say, in population counts) with respect to location,
or use the latter with respect to demographic attributes age and salary. 


Hence in {\sc AllRects}
the set $\cal R$ of allowed rectangles is the cross product
between the set of row intervals and the set of column intervals.
As a linear combination
of how few arbitrary rectangles can we write the given
matrix?   We prove this problem NP-hard and give a
2.56-approximation algorithm for it.  Again, APX-hardness is
unknown.

\section{Related Work}
To our knowledge, while numerous papers have studied similar
problems, none proposes any algorithm for either of the
two problems we study.  
One very relevant prior piece of work  is a polynomial-time
exact algorithm solving 
the 1-dimensional version of {\sc Tree$\times$Tree} (more
properly called the ``tree'' case in 1-d, since only one tree is
involved) \cite{ABGYKS}.  Here, as in the media-retailer example above,
we have a sequence of integers and a tree whose leaves are the elements of the sequence.
Indeed, we use this algorithm heavily in
constructing our randomized constant-factor approximation algorithm for the 
tree$\times$tree case.  

Relevant to our work is \cite{BCS} by Bansal, Coppersmith, and
Schieber,
which (in our language) studies
the 1-d (exact) problem in which all intervals are allowed and all must have {\em nonnegative} weights,
proves the problem NP-hard, and
gives a constant-factor approximation algorithm.

Also very relevant is a paper by
Natarajan~\cite{Natarajan}, which studies an ``inexact'' version
of the problem: instead of finding weighted rectangles whose
sum of weights is $a_{ij}$ exactly, for each matrix cell $(i,j)$,
these sums approximate the $a_{ij}$'s.
(Natarajan's algorithm is more general and can
handle any arbitrary set $\cal R$ of allowed rectangles;
however, the algorithm is very slow.)
More precisely, in the output set of
rectangles, define $a'_{ij}$ to be the sum of the weights of the
rectangles containing cell $(i,j)$.  Natarajan's algorithm 
ensures, given a tolerance $\Delta>0$, that the $L_2$ error 
$\sqrt {\sum_{i=1}^m \sum_{j=1}^n (a'_{ij}-a_{ij})^2}$ 
is at most $\Delta$.
(Natarajan's algorithm cannot be used for $\Delta=0$.)
The upper bound on the number of
rectangles produced by Natarajan's algorithm is a factor of
approximately $18 \ln (||A||_2/\Delta)$ (where $||A||_2$
is the square root of the sum of squares of the entries of $A$) 
larger than the optimal number used by an adversary who is
allowed, instead, only $L_2$-error $\Delta/2$.  
Furthermore, Natarajan's algorithm is very slow, much slower than our algorithms. 

Frieze and Kannan in \cite{FK} show how to inexactly represent a matrix as a sum of a
small number of rank-1 matrices, but their method is unsuitable to solve our
problem, as not only is there no way to restrict the
rank-1 matrices to be rectangles, the error is of $L_1$ type rather than
$L_\infty$.  In other words, the {\em sum} of the $mn$ errors is bounded by $\Delta
mn$, rather than individual errors' being bounded by $\Delta$.

Our problem may remind readers of compressed sensing, the decoding
aspect of which requires one to seek a solution $x$ with fewest
nonzeroes to a linear system $Hx=b$.
The key insight of compressed sensing is that when $H$ satisfies the
``restricted isometry property" \cite{W,CT,Donoho}, as do almost all random
matrices, the solution $x$ of minimum $L_1$ norm is also the sparsest
solution.  The problem with applying compressed sensing to the problems mentioned
herein, when the matrix $A$ is $m\times n$, is that the associated matrix $H$, which has $mn$ rows and a number of
columns equal to the number of allowed rectangles, is anything but random.  
On a small set of test instances, the
authors found the solutions of minimum $L_1$ norm
(using linear programming) and discovered that 
they were far from sparsest.

Other authors have studied other ways of representing matrices.
Applegate et al.~\cite {ACJKLW}
studied the problem of representing a {\em binary} matrix,
starting from an all-zero matrix, by an {\em ordered} sequence of rectangles,
each of whose entries is all 0 or all 1, in which $a_{ij}$ should equal
the entry of the {\em last} rectangle which contains cell $(i,j)$.
Anil Kumar and Ramesh \cite{AKR} study the same model in which only
all-1 rectangles are allowed (in which case the order clearly
doesn't matter).
Two papers \cite{Roth08,Karpinski09} study the Gale-Berlekamp
switching game and can be thought of as a variant of our problem
over $\mathbb{Z}_2$.

\section{A Few Words About Practicality}
Admittedly, for noisy data in the real world,
probably more practical problems than our ``exact'' 
problems are these two bounded-error  
(i.e., $L_\infty$) ``inexact'' problems:
Given an input of either {\sc Tree$\times$Tree} or {\sc AllRects}
and a number $\Delta\ge 0$, find a smallest subset of allowed rectangles, and weights for each,
such that for any cell $(i,j)$, $a_{ij}$ differs from the sum of the weights 
of the rectangles containing $(i,j)$ by at most $\Delta$ in absolute value.  
problems and so we leave them for future work.  
Nonetheless,
we find the exact problems interesting and the solutions nontrivial,
and hope that studying them may yield insight for solving the $\Delta>0$ case.

\section{Formal Definitions and Examples}\label{dande}
Given an $m\times n$ matrix $A=(a_{ij})$ and $1\le i_1\le i_2\le m$,
$1\le j_1\le j_2\le n$, recall that $Rect(i_1,i_2,j_1,j_2)=
\{(i,j)|i_1\le i\le i_2,j_1\le j\le j_2\}$.
Define $Rects=\{Rect(i_1,i_2,j_1,j_2)|1\le i_1\le i_2\le m,
1\le j_1\le j_2\le n\}$.
For each of the two problems, we are given a subset ${\cal R}\subseteq
Rects$;
the only difference between the two problems we discuss is the definition of $\cal R$.  
The goal is to find a smallest subset $OPT_2(A)$ of
${\cal R}$, and an associated
weight $w(R)$ (positive or negative) for each rectangle $R$, 
such that every cell
$(i,j)$ is covered by rectangles whose weights sum to $a_{ij}$,
that is, 
\begin{equation} \label {eqn1}
a_{ij}=\sum_{R: 
R \in OPT_2(A)
\mbox{ and }
R\ni (i,j) }
w(R),
\end{equation}
the ``2'' in ``$OPT_2(A)$'' referring to the fact that $A$ is
2-dimensional.

While the algorithm for the 
tree$\times$tree case appears (in Section \ref{treextree})
before that for the arbitrary-rectangles case (in Section
\ref{arbitrary}), here we define {\sc AllRects}, the latter, first, since it's
easier to define.
As mentioned above, we call the case of ${\cal R}=Rects$ {\sc AllRects}.

\noindent {\bf Example.}
Since the matrix
\begin{eqnarray*}
A = \left[ \begin{smallmatrix}
2 & 2 & 2 & 2 \\
5 & 3 & 1 & 2 \\
6 & 4 & 1 & 3 \\
5 & 5 & 2 & 2
\end{smallmatrix} \right]
&=&
2\left[ \begin{smallmatrix}
1 & 1 & 1 & 1 \\
1 & 1 & 1 & 1 \\
1 & 1 & 1 & 1 \\
1 & 1 & 1 & 1
\end{smallmatrix} \right]
+
3\left[ \begin{smallmatrix}
0 & 0 & 0 & 0 \\
1 & 1 & 0 & 0 \\
1 & 1 & 0 & 0 \\
1 & 1 & 0 & 0
\end{smallmatrix} \right]
+
1\left[ \begin{smallmatrix}
0 & 0 & 0 & 0 \\
0 & 0 & 0 & 0 \\
1 & 1 & 1 & 1 \\
0 & 0 & 0 & 0
\end{smallmatrix} \right]
-
2\left[ \begin{smallmatrix}
0 & 0 & 0 & 0 \\
0 & 1 & 1 & 0 \\
0 & 1 & 1 & 0 \\
0 & 0 & 0 & 0
\end{smallmatrix} \right]
+
1\left[ \begin{smallmatrix}
0 & 0 & 0 & 0 \\
0 & 0 & 1 & 0 \\
0 & 0 & 0 & 0 \\
0 & 0 & 0 & 0
\end{smallmatrix} \right],
\end{eqnarray*}
$A$ can be written as a linear combination with
$w(\{1,2,3,4\}\times \{1,2,3,4\})=2$, 
$w(\{2,3,4\}\times \{1,2\})=3$,
$w(\{3\}\times \{1,2,3,4\})=1$, 
$w(\{2,3\}\times \{2,3\})=-2$, and
$w(\{2\}\times \{3\})=1$. 
Hence $|OPT_2(A)|\le 5$. 

We need some notation in order to define
{\sc Tree$\times$Tree}, in which we are also given trees $T_1$
and $T_2$.
We use $R_i$ to denote the row vector in the $i$th row of
the input matrix, $1\le i\le m$.
For a node $u\in T_1$, let $S^1_u = \{R_l \ : \ l\mbox { is a
leaf descendant in $T_1$ of }u\}$.
Similarly, we
use $C_j$ to denote the column vector in the $j$th column of
the input matrix, $1\le j\le n$.
For a node $v\in T_2$, let $S^2_v = \{C_l \ : \ l\mbox { is a
leaf descendant in $T_2$ of }v\}$.
Note that, since $T_1$ and $T_2$ are trees,
$\{S^1_u| u\in T_1\}$ and $\{S^2_v|v\in T_2\}$ are laminar.

In this notation,
in {\sc Tree$\times$Tree}, 
${\cal R}=\{S^1_u|u\in T_1\}\times \{S^2_v|v \in T_2\}$.


\noindent {\bf Example.}
Using trees $T_1$, $T_2$ having a root with four children (and
no other nodes) apiece, 
we may use any single row or all rows, and any single column or
all columns.
For example, since the matrix
\begin{eqnarray*}
A = \left[ \begin{smallmatrix}
5 & 3 & 4 & 5 \\
3 & 0 & 2 & 4 \\
2 & 2 & 1 & 3 \\
3 & 3 & 2 & 3
\end{smallmatrix} \right]
&=&
3\left[ \begin{smallmatrix}
1 & 1 & 1 & 1 \\
1 & 1 & 1 & 1 \\
1 & 1 & 1 & 1 \\
1 & 1 & 1 & 1
\end{smallmatrix} \right]
+
2\left[ \begin{smallmatrix}
1 & 1 & 1 & 1 \\
0 & 0 & 0 & 0 \\
0 & 0 & 0 & 0 \\
0 & 0 & 0 & 0
\end{smallmatrix} \right]
-
1\left[ \begin{smallmatrix}
0 & 0 & 0 & 0 \\
0 & 0 & 0 & 0 \\
1 & 1 & 1 & 1 \\
0 & 0 & 0 & 0
\end{smallmatrix} \right]
-
1\left[ \begin{smallmatrix}
0 & 0 & 1 & 0 \\
0 & 0 & 1 & 0 \\
0 & 0 & 1 & 0 \\
0 & 0 & 1 & 0
\end{smallmatrix} \right]\\
&-&
2\left[ \begin{smallmatrix}
0 & 1 & 0 & 0 \\
0 & 0 & 0 & 0 \\
0 & 0 & 0 & 0 \\
0 & 0 & 0 & 0
\end{smallmatrix} \right]
-
3\left[ \begin{smallmatrix}
0 & 0 & 0 & 0 \\
0 & 1 & 0 & 0 \\
0 & 0 & 0 & 0 \\
0 & 0 & 0 & 0
\end{smallmatrix} \right]
+
1\left[ \begin{smallmatrix}
0 & 0 & 0 & 0 \\
0 & 0 & 0 & 1 \\
0 & 0 & 0 & 0 \\
0 & 0 & 0 & 0
\end{smallmatrix} \right]
+
1\left[ \begin{smallmatrix}
0 & 0 & 0 & 0 \\
0 & 0 & 0 & 0 \\
0 & 0 & 0 & 1 \\
0 & 0 & 0 & 0
\end{smallmatrix} \right],
\end{eqnarray*}
we can write $A$ as a sum with
$w(\{1,2,3,4\}\times \{1,2,3,4\})=3$, 
$w(\{1\}\times \{1,2,3,4\})=2$, 
$w(\{3\}\times \{1,2,3,4\})=-1$, 
$w(\{1,2,3,4\}\times \{3\})=-1$, 
$w(\{1\}\times \{2\})=-2$, 
$w(\{2\}\times \{2\})=-3$, 
$w(\{2\}\times \{4\})=1$, and
$w(\{3\}\times \{4\})=1$. 
Since there are eight matrices, $|OPT_2(A)|\le 8$.


Note that we use the same notation, $OPT_2(A)$, for the optimal
solutions of both {\sc AllRects} 
and {\sc Tree$\times$Tree}.

\section{Approximation Algorithm for {\sc Tree\texorpdfstring{$\times$}{ x }Tree}}\label{treextree}

We defer the proof of NP-Hardness of {\sc Tree$\times$Tree} to the appendix.

Our algorithm
will rely upon the exact 
algorithm, 
due to Agarwal et al.~\cite{ABGYKS}, for the case in which
the matrix has just one column (that is, the 1-dimensional
case).  
\begin{definition}
Given a fixed rooted tree $T_1$ with $m$ leaves,
and an $m$-vector $V=(v_i)$, let
$OPT_1(V)$ denote a smallest set of intervals
$I=\{i:i_1\le i\le i_2\}\subseteq [1,m]$ and associated weights $w(I)$, 
each $I$ corresponding to a node of
$T_1$, such that for all $i$, $v_i=
\sum_{I: I \in OPT_1(V) \mbox{ and }I\ni i} w(I)$.
\end{definition}

Clearly $|OPT_1(V)|$ equals $|OPT_2(V')|$, where $V'$ is the
$m\times 1$ matrix containing $V$ as a column.
The difference is that $OPT_1(V)$ is a set of vectors while 
$OPT_2(V')$ is a set of rectangles.
We emphasize that $V$ is a vector and that the definition
depends on $T_1$ and not $T_2$ by putting the ``1'' in
``$OPT_1(V)$''.  The key point is that 
\cite{ABGYKS} showed how to compute $OPT_1(V)$ exactly.  

In order to charge the algorithm's cost against $OPT_2(A)$, we need
to know some facts about $OPT_2(A)$.  
Recall that $OPT_2(A)$ is a smallest subset of $\cal R$ such that there are 
weights $w(R)$ such that equation (\ref{eqn1}) holds.  

\begin{definition}

\noindent
\begin{enumerate}
\item
For each rectangle $R$ and associated weight $w_R$, 
let $R'_{w_R}$ denote the $m \times n$ matrix which is 0 for every cell $(i,j)$,
except that ${R'_{w_R}}_{ij}:=w_R$ if $(i,j)\in R$.
\item
Given a vertex $v$ of $T_2$, let $D_v$ be the set of all $R\in OPT_2(A)$ 
such that $R$ has column set exactly equal to $S_v^2$.   
\item
Now let $K_v=\sum_{R \in D_v} R'_{w_R}$.  By definition of $D_v$, all columns $j$
of $K_v$
for $j\in D_v$ are the same.  Let $V_v$ be column $j$ of $K_v$ for any $j\in D_v$.
\end{enumerate}
\end{definition}

\begin{lemma}\label{mainforalgo}
The column vectors $(V_v)$ satisfy the following:
\begin{enumerate}
\item \label{part1}
For all leaves $l$ in $T_2$, the vector $C_l$ 
equals the sum of $V_v$ over all ancestors $v$ of $l$ in $T_2$.

\item \label{part1prime}
For all leaves $l'$ and $l''$ in $T_2$ with a common ancestor $u$,
the vector $C_{l'} - C_{l''}$ equals the sum of $V_v$ over all vertices $v$
on the path from $u$ down to $l'$ (not including $v=u$) minus the sum of $V_v$
over all vertices $v$ on the path from $u$ down to $l''$ (not including $v=u$).

\item \label{part2new} 
The union, over all vertices $v\in T_2$, of 
$OPT_1(V_v)\times \{S^2_v\}$ (which 
obviously has size $|OPT_1(V_v)|$), with the corresponding
weights,
is an optimal solution for {\sc Tree$\times$Tree} on $A$.

\item \label{part2}
$|OPT_2(A)|=\sum_{v \in T_2} |\o1(V_v)|$.
\end{enumerate}
\end{lemma}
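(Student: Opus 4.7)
The plan is to prove Part 1 first by unpacking the structure of $OPT_2(A)$, derive Part 2 as a telescoping consequence, and then establish Parts 3 and 4 together via matching upper and lower bounds on $|OPT_2(A)|$. Throughout, write $P(l)$ for the set of ancestors of $l$ in $T_2$ (including $l$ itself).

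For Part 1, every $R \in OPT_2(A)$ has a unique column set of the form $S_v^2$, so $\{D_v\}_{v \in T_2}$ partitions $OPT_2(A)$ and $A = \sum_R R'_{w_R} = \sum_v K_v$. By construction every column of $K_v$ indexed by a leaf descendant of $v$ equals $V_v$, while all other columns of $K_v$ vanish. Reading column $l$ of $A$ therefore gives $C_l = \sum_{v\, :\, l \in S_v^2} V_v = \sum_{v \in P(l)} V_v$.

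For Part 2, apply Part 1 at $l'$ and $l''$ and subtract: $C_{l'} - C_{l''} = \sum_{v \in P(l') \setminus P(l'')} V_v - \sum_{v \in P(l'') \setminus P(l')} V_v$, since $v \in P(l') \cap P(l'')$ cancels. The common ancestors of $l'$ and $l''$ form a chain from $\text{LCA}(l', l'')$ up to the root, and the given $u$ lies on this chain. The path from $u$ (exclusive) down to $l'$ splits into (a) the nodes strictly between $u$ and the LCA, together with the LCA itself, which are common ancestors, and (b) the strict descendants of the LCA that are ancestors of $l'$, which is exactly $P(l') \setminus P(l'')$. Part (a) coincides with the analogous piece of the path from $u$ (exclusive) down to $l''$ and so cancels in the stated difference, and part (b) matches the right-hand side of the identity in Part 1.

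For Parts 3 and 4, first verify that the union $\bigcup_v OPT_1(V_v) \times \{S_v^2\}$ with inherited weights represents $A$: for a cell $(i,j)$, the rectangles of the union containing $(i,j)$ are exactly those of the form $I \times S_v^2$ with $v \in P(j)$ and $I \in OPT_1(V_v)$ containing $i$, and summing their weights gives $\sum_{v \in P(j)} \sum_{I \in OPT_1(V_v),\, I \ni i} w(I) = \sum_{v \in P(j)} (V_v)_i = (C_j)_i = a_{ij}$ by the defining property of $OPT_1$ and by Part 1. This shows $|OPT_2(A)| \leq \sum_v |OPT_1(V_v)|$. The reverse inequality comes from $OPT_2(A)$ itself: writing each $R \in D_v$ as $I_R \times S_v^2$, the identity $V_v = \sum_{R \in D_v} w_R \mathbf{1}_{I_R}$ is a valid $T_1$-representation of $V_v$ with $|D_v|$ terms, so $|D_v| \geq |OPT_1(V_v)|$; summing over the partition of $OPT_2(A)$ yields $|OPT_2(A)| \geq \sum_v |OPT_1(V_v)|$. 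Together these prove Part 4, and the matching upper bound shows the explicit construction is optimal, giving Part 3.

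The main obstacle is the ancestor bookkeeping in Part 2 when $u$ is strictly above the LCA, where one must match the extra common-ancestor pieces on the two paths and argue they cancel. A secondary subtlety in Part 4 is that $\{D_v\}$ may fail to be a set-theoretic partition if two nodes of $T_2$ share the same leaf set (e.g., a node with a unique child); this is resolved by fixing a canonical $v$ for each distinct column set $S_v^2$, so each $R \in OPT_2(A)$ is charged to exactly one $D_v$, and the construction in Part 3 is likewise restricted to those canonical $v$.
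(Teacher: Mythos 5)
Your proof is correct and follows the paper's approach: the same $D_v/K_v/V_v$ decomposition gives Part 1, cancellation of the shared ancestor segment gives Part 2, and matching per-column-set upper and lower bounds give Parts 3 and 4 (the paper phrases the lower bound as a contradiction--replace a too-large $D_v$ by the 1-d optimum--but it is the same counting). You are also right to flag that $\{D_v\}$ over-counts when $T_2$ has a degree-one internal node (in which case even Part 1 fails as literally stated, since $V_v = V_{p(v)}$ would be added twice); the paper tacitly relies on the WLOG assumption, stated only later for the algorithm, that every internal node has at least two children, which makes the sets $S^2_v$ pairwise distinct and renders your canonical-representative patch unnecessary.
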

\begin{proof}
The nodes $v$ which correspond to sets of columns containing column $C_l$
are exactly the ancestors in $T_2$ of $l$.
Hence, Part \ref{part1} follows. 

Part~\ref{part1prime} is an immediate corollary of Part \ref{part1}.

Clearly, by Part \ref{part1}, the union over all vertices $v\in T_2$ of 
$OPT_1(V_v)\times \{S^2_v\}$
is a feasible solution for {\sc Tree$\times$Tree} on $A$.
It is also optimal, and here is a proof.  
The size of the optimal solution $OPT_2(A)$
equals the sum, over vertices $v\in T_2$, of the number of
rectangles in $OPT_2(A)$ having column set $S^2_v$.  Fix a
vertex $v\in T_2$.  Since the weighted sum of the rectangles in
$OPT_2(A)$ with column set $S^2_v$ is $V_v$, and each has a row
set $S^1_u$ for some $u\in T_1$, the number of such
rectangles must be at least $OPT_1(V_v)$.  If the number of rectangles
with column set $S^2_v$ strictly exceeded $OPT_1(V_v)$, we could
replace all rectangles in $OPT_2(A)$ having column set $S^2_v$ by
a smaller set of weighted rectangles having column set $S^2_v$, each of whose 
columns is the same, and summing to $V_v$
in each column;
since the new set and the old set have the same
weighted sum, the new solution would still sum to $A$, and have
better-than-optimal size, thereby contradicting optimality of
$OPT_2(A)$. Part \ref{part2new} follows.

Part~\ref{part2} follows from Part \ref{part2new}.
\end{proof}

Lemma \ref{mainforalgo} will be instrumental in analyzing the
algorithm. 

While the algorithm is very simple to state, it was nontrivial
to develop and analyze.
In the algorithm, we use the algorithm by Agarwal et al.~\cite{ABGYKS}
to obtain $OPT_1(V)$ given a vector $V$.

\rule{0pt}{12pt}
\hrule height 0.8pt
\rule{0pt}{1pt}
\hrule height 0.4pt
\rule{0pt}{6pt}

\textbf{Algorithm for {\sc Tree\texorpdfstring{$\times$}{ x }Tree}}
\begin{enumerate}
\item For every internal node $u$ in the tree $T_2$, pick a random child $u^*$ of $u$ and let  $c(u) = u^*$. Let $path(u)$ be the random path going from $u$ to a leaf:
$$u \mapsto c(u) \mapsto c(c(u)) \mapsto \cdots \mapsto l(u),$$
where we denote the last node on the path, the leaf, by $l(u)$.

\item Where $root$ denotes the root of $T_2$, 
for every node $u$ in $T_2$, in 
increasing order by depth, do:
\begin{itemize}
\item If $u$ is the root of $T_2$, then
\begin{itemize}
\item  Output $OPT_1(C_{l(root)})\times \{S^2_{root}\}$ with the
corresponding weights (those of the optimal solution for
$C_{l(root)}$).
\end{itemize}
\item Else
\begin{itemize}
\item Let $p(u)$ be the parent of $u$.
\item  Output $OPT_1(C_{l(u)} - C_{l(p(u))})\times \{S^2_u\}$
with the corresponding weights.
\end{itemize}
\end{itemize}
\end{enumerate}
\rule{0pt}{1pt}
\hrule height 0.4pt
\rule{0pt}{1pt}
\hrule height 0.8pt
\rule{0pt}{12pt}


\begin{theorem}\label{thm:algmain}
The expected cost of the algorithm is at most $2|OPT_2(A)|$.
\end{theorem}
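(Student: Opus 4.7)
The plan is to bound the algorithm's cost by a random linear combination of the quantities $|OPT_1(V_v)|$ from Lemma~\ref{mainforalgo}, and to show that each $|OPT_1(V_v)|$ is charged at most twice in expectation. Combined with Lemma~\ref{mainforalgo}(\ref{part2}), this will yield the $2$-approximation.

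First, observe that $|OPT_1(\cdot)|$ is subadditive (concatenate representations) and invariant under negation (flip the weights). I would then apply Lemma~\ref{mainforalgo} to rewrite each $OPT_1$-call of the algorithm as a signed sum of $V_v$'s. By Part~\ref{part1}, the root call satisfies $C_{l(root)} = \sum_{v \in path(root)} V_v$. For non-root $u$, if $u = c(p(u))$ then $l(u) = l(p(u))$ and the cost is $0$; otherwise Part~\ref{part1prime} applied with $p(u)$ as common ancestor gives
\[
C_{l(u)} - C_{l(p(u))} \;=\; \sum_{v \in path(u)} V_v \;-\; \sum_{v \in path(c(p(u)))} V_v,
\]
where the two paths are vertex-disjoint. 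By subadditivity, the total cost is at most $\sum_v N_v \cdot |OPT_1(V_v)|$, where $N_v$ is a random integer counting how many of the above paths contain $v$. It suffices to show $\Exp[N_v] \le 2$ for every $v$.

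Fix $v$ and let $root = a_0, a_1, \ldots, a_d = v$ be the root-to-$v$ path, with $n_i$ the number of children of $a_{i-1}$. Set $q_i = \prod_{k=i+1}^d 1/n_k$, the conditional probability that the random walk from $a_i$ reaches $v$. I would compute $\Exp[N_v]$ from three sources: (a) the root path contributes $\Pr[v \in path(root)] = q_0$; (b) for each $i \in \{1,\ldots,d\}$, taking $u = a_i$, the (independent) events $\{u \ne c(p(u))\}$ and $\{v \in path(u)\}$ jointly have probability $(1 - 1/n_i)\, q_i$; (c) for each $i$ and each of the $n_i - 1$ siblings $u$ of $a_i$, the event $\{v \in path(c(p(u)))\}$ requires $c(a_{i-1}) = a_i$ (probability $1/n_i$) and the walk from $a_i$ to reach $v$ (probability $q_i$), while $u \ne c(p(u))$ is automatic; summing over siblings yields $(n_i-1)(1/n_i)\, q_i$.

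Using $q_{i-1} = q_i/n_i$, the identity $(1 - 1/n_i)\, q_i = q_i - q_{i-1}$ makes both (b) and (c) telescope to $\sum_{i=1}^d (q_i - q_{i-1}) = 1 - q_0$, so
\[
\Exp[N_v] \;=\; q_0 \;+\; 2\sum_{i=1}^d (1-1/n_i)\, q_i \;=\; q_0 + 2(1-q_0) \;=\; 2 - q_0 \;\le\; 2,
\]
and combining with Lemma~\ref{mainforalgo}(\ref{part2}) finishes the proof. The main subtlety I expect is the bookkeeping in case (c): the $1/n_i$ factor (the probability of choosing the ``right'' child from $a_{i-1}$) exactly offsets the $n_i - 1$ siblings one sums over, making the per-level contribution independent of the tree degree and enabling the clean telescoping. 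Without this cancellation one would expect a bound depending on depth or degrees rather than the constant $2$.
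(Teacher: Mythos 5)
Your proof is correct and is essentially the paper's own argument, reorganized: the three sources (a), (b), (c) of your count $N_v$ correspond one-to-one to the three sums in the paper's bound \eqref{eq:TotalBound}, your observation that the sibling sum in (c) contributes $(n_i-1)(1/n_i)q_i = (1-1/n_i)q_i$ per level is exactly the paper's identity \eqref{eq:SumU}, and your telescoping of $\sum_i(q_i - q_{i-1})$ to get $2 - q_0 < 2$ is the same telescoping the paper performs with $d(p^i(v))$ in place of $n_{d-i+1}$. Framing the bound as $\mathbb{E}[N_v] \le 2$ per node $v$ from the outset is a slightly cleaner presentation of the same charging argument, but there is no new idea or different route here.
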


In the main part of the paper we prove a weaker guarantee
for exposition: the expected cost of the algorithm is at most $4|OPT_2(A)|$.
We defer the improvement to the appendix.

The algorithm can be easily derandomized using dynamic programming.

\begin{proof}
Every column $C_u$ is covered by rectangles with sum
$$(C_u - C_{l(p(u))}) + (C_{l(p(u))} - C_{l(p(p(u)))}) + \cdots + C_{l(root)} = C_u.$$
Thus the algorithm produces a valid solution. 
We now must estimate the expected cost of the solution. 
The total cost incurred by the algorithm is
$$|OPT_1(C_{l(root)})|+\sum_{u\ne root}
|OPT_1(C_{l(u)}-C_{l(p(u))})|.$$
Assume, without loss of generality, that all nodes in the tree  either have two or more children or are leaves. Denote the number of children of a node $v$, the degree of $v$,
by $d(v)$. Denote by $\ONE$ the indicator function. Observe that for the root node we have
$$|OPT_1(C_{l(root)})| = \left |OPT_1 \left(\sum_{v \in path(root)} V_v\right)\right | \leq \sum_{v \in path(root)} |OPT_1(V_v)|;$$
for a nonroot vertex $u$, we have by Lemma~\ref{mainforalgo} \eqref{part1prime},
keeping in mind that $l(\cdot)$, $c(\cdot)$, and $path(\cdot)$ are random,
\begin{eqnarray*}
|OPT_1(C_{l(u)} - C_{l(p(u))})| &=& \left |OPT_1\left( \sum_{v\in path(u)} V_v - \sum_{v\in path(c(p(u)))} V_v\right)\right |\\
&\leq& \left(\sum_{v\in path(u)} |OPT_1 (V_v)| + \sum_{v\in path(c(p(u)))} |OPT_1 (V_v)| \right)\cdot\ONE (u \neq c(p(u))).
\end{eqnarray*}
Here we used the triangle inequality for the function $|OPT_1(\cdot)|$.

Consider the second sum in the right-hand side. For every child $u'$ of $p(u)$, 
the random node $c(p(u))$ takes value $u'$ with probability
$1/d(p(u))$. Thus 
\begin{align*}
\Exp\Big[\sum_{v\in path(c(p(u)))}& |OPT_1 (V_v)| \cdot\ONE (u \neq c(p(u)))\Big] \\
=
\frac{1}{d(p(u))}&\sum_{u': u' \text{ is a sibling of } u}\E{
\Big(\sum_{v\in path(c(p(u)))} |OPT_1 (V_v)|\Big)\ \mid c(p(u))=u'}\\
=
\frac{1}{d(p(u))}&\sum_{u': u' \text{ is a sibling of } u}\E{\sum_{v\in path(u')} |OPT_1 (V_v)|}.
\end{align*}
$\Prob{u \neq c(p(u))}$ equals $(d(p(u)) - 1)/d(p(u))$. Denote this expression by $\alpha_u$. 
The total expected size of the solution returned by the algorithm is bounded by
\begin{align} \textstyle
\Exp&\Big[\sum_{v\in path(root)} |OPT_1 (V_v)| \Big] +  \sum_{u\ne root} \alpha_u \E{\sum_{v\in path(u)} |OPT_1 (V_v)| } \\
&\qquad\qquad{\textstyle\phantom{=}
+ \sum_{u\ne root}\frac{1}{d(p(u))}\sum_{u': u' \text{ is a sibling of } u}\E{\sum_{v\in path(u')} |OPT_1 (V_v)|}}\nonumber\\
&\qquad{\textstyle= \E{\sum_{v\in path(root)} 
|OPT_1 (V_v)| } + \sum_{u\ne root}\alpha_u \E{\sum_{v\in path(u)} |OPT_1 (V_v)| }}
\nonumber\\
&\qquad\qquad{\textstyle \phantom{=} + \sum_{u'\ne
root}
\Big(\sum_{u\ne root}\frac{\ONE (u' \text{ is a sibling of } u)}{d(p(u'))}\Big)\E{\sum_{v\in path(u')} |OPT_1 (V_v)|}.}\label{eq:TotalBound}
\end{align}
Notice that, for a fixed $u'\ne root$,
\begin{equation}
\sum_{u\ne root}\frac{\ONE (u' \text{ is a sibling of } u)}{d(p(u'))} = \frac{d(p(u')) - 1}{d(p(u'))} = \alpha_{u'}< 1. \label{eq:SumU}
\end{equation}
Hence, the total cost of the solution is bounded by
$$\!\sum_{u}\E{\!\sum_{v\in path(u)}\!\! |OPT_1 (V_v)| } + \sum_{u'\ne
root}\E{\!\sum_{v\in path(u')}\!\! |OPT_1 (V_v)|} \le 
2\sum_{u}\E{\!\sum_{v\in path(u)}\!\!|OPT_1 (V_v)| }.
$$

Finally, observe that node $v$ belongs to $path(v)$ with probability $1$; it belongs to the $path (p(v))$ with probability
at most $1/2$; it belongs to the path $path (p(p(v)))$ with probability at most $1/4$, etc. It belongs to $path(u)$ with probability 0 if $u$ is
not an ancestor of $v$.  Thus
\begin{eqnarray*}
2\sum_{u}\E{\sum_{v\in path(u)} |OPT_1 (V_v)|} &=& 2\sum_{v} |OPT_1 (V_v)| \cdot \Big(\sum_u \Prob{v \in path(u)}\Big) \\&\leq& 
2 \sum_{v} |OPT_1 (V_v)|  \cdot \Big(1 + 1/2 + 1/4 + \cdots \Big) \\&<& 4\sum_{v} |OPT_1 (V_v)| \leq 4|OPT_2(A)|.
\end{eqnarray*}
We have proven that the algorithm finds a 4-approximation. 
A slightly more careful analysis, in the appendix, shows that the approximation ratio of the algorithm is at most 2.
\end{proof}

What is the running time of the 2-approximation algorithm?  The time needed to run the 1-dimensional algorithm of 
\cite{ABGYKS} is $O(dn)$ where there are $n$ leaves in each tree
and the {\em smaller} of the two depths is $d$.  
One can verify that the running time of our 2-approximation algorithm is a factor $O(n)$ larger, or 
$O(dn^2)$.  In most applications at least one of the trees would have depth $O(\log n)$, giving $O(n^2 \log n)$ in total.

\section{Approximation Algorithm For {\sc AllRects}}\label{arbitrary}

\subsection{The 1-Dimensional Problem}
First we consider the one-dimensional case, for which we will 
give a $(23/18+\varepsilon)$-approximation algorithm;
$23/18<1.278$. We are given a sequence 
$a_1,a_2, \dots, a_{n}$ of numbers 
and we need to find a collection of closed  intervals $[i,j]$
with arbitrary real weights $w_{ij}$ so that every integral
point $k\in\{1,\ldots,n\}$ is covered by a set of intervals with
total weight $a_k$. That is, for all $k$,
\begin{equation}\label{eq:DefOneD}
\sum_{i,j: k\in[i,j]} w_{ij} = a_k.
\end{equation}
Our goal is to find the smallest possible collection. We shall use the approach of 
Bansal, Coppersmith, and Schieber~\cite {BCS} (in their problem all $a_i\geq 0$ and all $w_{ij}>0$). Set $a_0 = 0$ and $a_{n+1} = 0$. Observe that if $a_k=a_{k+1}$, then in the optimal solution every interval covering $k$ also covers $k+1$. On the other hand, since every rectangle covering both $k$ and $k+1$ contributes the same weight to $a_k$ and $a_{k+1}$, if $a_k \neq a_{k+1}$, then there should be at least one interval that either covers $k$ but not $k+1$, or covers $k+1$ but not $k$. By the same reason, the difference $a_{k+1}-a_k$, which we denote by $\Delta_k = a_{k+1}-a_k$, equals the difference between the weight of intervals with the left end-point at $k+1$ and the weight of rectangles with the right endpoint at $k$:
 \begin{equation}\label{eq:DeltaOneD}
\Delta_k = \sum_{j: k+1 \leq j} w_{k+1,j} - \sum_{i: i\leq k} w_{ik}.
\end{equation}
Note that if we find a collection of rectangles with weights
satisfying \eqref{eq:DeltaOneD}, then this collection of
intervals is a valid solution to our problem, i.e., then
equality \eqref{eq:DefOneD} holds. Define a directed graph on
vertices $\{0,\dots, n\}$. For every interval $[i,j]$, we add an
arc going from $i - 1$ to $j$. Then the condition
\eqref{eq:DeltaOneD} can be restated as follows: The sum of
weights of  arcs outgoing from $k$ minus the sum of weights of
arcs entering $k$ equals $\Delta_k$. Our goal is to find the
smallest set of arcs with non-zero weights satisfying this
property. Consider an arbitrary solution and one of the  weakly
connected components $S$. The sum  $\sum_{k\in S}\Delta_k = 0$,
since every arc is counted twice in the sum, once with the  plus sign and once with the minus sign. Since $S$ is a connected component the number of arcs connecting nodes in $S$ is at least $|S|-1$. Thus a lower bound on the number of arcs or intervals in the optimal solution is the minimum
of  
$$\sum_{t=1}^M (|S_t| - 1) = n+1 - M$$
among all partitions of the set of items $\{0,\dots, n\}$  into $M$ disjoint sets $S_1,\dots, S_M$ such that 
$\sum_{k\in S_t} \Delta_k = 0$
for all $t$.
On the other hand, given such a partition $(S_1,\dots, S_M)$, we can easily construct a set of intervals. Let $k_t$ be the minimal element in $S_t$. For every element $k$ in $S_t \setminus \{k_t\}$, we add an interval $[k_t + 1, k]$ with weight $-\Delta_{k}$. We now verify that these intervals satisfy~\eqref{eq:DeltaOneD}. If $k$ belongs to $S_t$ and $k\neq k_t$, then there is only one interval in the solution with right endpoint at $k$. This interval is $[k_t + 1,k]$ and its weight is $-\Delta_k$. The solution does not contain intervals with left endpoint at $k+1$ (since $k\neq k_t$). Thus ~\eqref{eq:DeltaOneD} holds as well. If $k$ belongs to $S_t$ and $k = k_t$, the solution does not contain intervals with the right endpoint at $k$, but for all
$k'\in S_t$ there is an interval $[k+1,k']$ with weight $-\Delta_{k'}$. The total weight of these intervals equals 
$$\sum_{k' \in S_t; k'\neq k} - \Delta_{k'} = - \sum_{k' \in S_t}\Delta_{k'} + \Delta_{k} = \Delta_k.$$
Condition ~\eqref{eq:DeltaOneD} again holds.

Thus the problem is equivalent to the problem of partitioning the set of items  $\{0,\dots, n\}$ into a family of $M$ sets 
$\{S_1, \dots, S_M\}$ satisfying the condition $\sum_{k\in S_t} \Delta_k = 0$ for all  $t$,
so as to minimize $\sum_{t} (|S_t| - 1) = (n+1) - M$.
Notice that the sum of all $\Delta_k$ equals 0. Moreover, every set with the sum of $\Delta_k$ 
equal to 0 corresponds to an instance of the 1-dimensional rectangle covering problem. We shall
refer to the problem as {\sc Zero-Weight Partition}.

We now describe the approximation algorithm for {\sc Zero-Weight
Partition} which is a modification of the algorithm of Bansal, Coppersmith, and Schieber~\cite {BCS} designed for a 
slightly different
problem  (that of minimizing setup times in radiation therapy). 
\begin{remark}
For {\sc Zero-Weight Partition}, our algorithm gives a slightly
better approximation guarantee than that of \cite {BCS}: $23/18\approx 1.278$ vs $9/7\approx 1.286$. The difference between algorithms is that the algorithm of Bansal, Coppersmith, and Schieber~\cite {BCS} performs either the first and third steps (in terms of our algorithm; see below), or the second and third steps; while our algorithm always performs all three steps.
\end{remark}

In the first step the algorithm picks all singleton sets $\{k\}$ with $\Delta_k = 0$ and pairs $\{i,j\}$ with $ \Delta_i= - \Delta_j$. It removes the items covered by any of the chosen sets. At the second step, with probability $2/3$ the algorithm enumerates all triples $\{i,j,k\}$ with $\Delta_i + \Delta_j+\Delta_k = 0$ and finds the largest 3-set packing among them using the $(3/2+\varepsilon)$-approximation algorithm due to Hurkens and Schrijver~\cite{HS}, i.e., it finds the largest (up to a factor of $(3/2+\varepsilon)$) disjoint family of triples $\{i,j,k\}$
with $\Delta_i + \Delta_j+\Delta_k = 0$.
Otherwise (with probability $1/3$), the algorithm enumerates all quadruples $\{i,j,k,l\}$ having $\Delta_i + \Delta_j+\Delta_k + \Delta_l = 0$ and finds the largest 4-set packing among them using the $(2+\varepsilon)$-approximation algorithm due to Hurkens and Schrijver~\cite{HS}. At the third, final, step the algorithm covers all remaining items, whose sum of $\Delta_k$'s is zero, with one set. 

Before we start analyzing the algorithm, let us consider a simple example. Suppose that $$(a_1, a_2,a_2,a_4,a_5,a_6) = (15,8,10,17,18,15).$$ First we surround the vector with two 0's:  $$(a_0, a_1, a_2,a_2,a_4,a_5,a_6, a_7)
 = (0,15,8,10,17,18,15,0).$$
Then compute the vector of $\Delta_k$'s: \begin{eqnarray*}(\Delta_0, \Delta_1, \Delta_2,\Delta_2,\Delta_4,\Delta_5,\Delta_6) &=& (15-0,8-15,10-8, 17 - 10 , 18-17, 15 - 18, 0- 15 )\\ &=& 
(15,-7,2,7 ,1,-3,-15).\end{eqnarray*}
Notice that $(-15) + 7 + (-2) + (-7) + (-1) + 3 + 15 = 0$. We partition the set into sets of weight~0: 
$$\{\Delta_0, \Delta_6\}, \{\Delta_1, \Delta_3\}, \{\Delta_2, \Delta_4, \Delta_5\}.$$
This partition corresponds to the following solution of the 1-dimensional problem:
interval $[1,6]$ with weight $15$, 
interval $[2,3]$ with weight $-7$, 
interval $[3,4]$ with weight $-1$, 
interval $[3,5]$ with weight $3$.

\begin{lemma}
For every positive $\varepsilon>0$, the approximation ratio of the algorithm when using $\varepsilon$ 
is at most $23/18 + O(\varepsilon)$, with $23/18 <1.278$.
\end{lemma}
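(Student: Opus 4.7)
I plan to bound $E[\mathrm{ALG}]$ by decomposing the analysis according to the size profile of an optimal partition. Fix an optimal partition and let $o_k$ denote the number of its parts of size $k$; write $n_{5+}=\sum_{k\ge 5}k\,o_k$ and $o_{5+}=\sum_{k\ge 5}o_k$. Without loss of generality every OPT part is irreducible (contains no proper nonempty zero-sum subset), since splitting a reducible part would strictly lower the OPT cost. Under irreducibility the items with $\Delta_k=0$ are exactly OPT's singletons, OPT's pairs are its zero-sum $2$-subsets, and in particular OPT's singletons together with its pairs form a valid matching of the type computed in step~1, so step~1 outputs at least $o_1+o_2$ pieces.

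For step~2 I analyze the two random branches separately. By irreducibility, no proper zero-sum $3$-subset (resp.\ $4$-subset) is hidden inside an OPT part of size $\ge 4$ (resp.\ $\ge 5$), so if step~1 leaves OPT's triples and quadruples intact in the remaining set $I$ then OPT's triples form a valid $3$-set packing of $I$ of size $o_3$ and OPT's quadruples form a valid $4$-set packing of size $o_4$. The Hurkens--Schrijver $(3/2+\varepsilon)$-approximation used in step~2a therefore finds at least $(2/3-O(\varepsilon))\,o_3$ zero-sum triples, and the $(2+\varepsilon)$-approximation used in step~2b finds at least $(1/2-O(\varepsilon))\,o_4$ zero-sum quadruples. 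Step~3 then bundles the remaining items---whose $\Delta$-sum is zero because every previously chosen set is zero-sum---into one big set.

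Using $\mathrm{ALG}=(n+1)-|\mathrm{ALG\ sets}|$ and lower bounding $|\mathrm{ALG\ sets}|$ by the step~1 and step~2 contributions, a short calculation gives
\[
\mathrm{ALG}_3 \le o_2 + (7/3+O(\varepsilon))\,o_3 + 4\,o_4 + n_{5+},
\qquad
\mathrm{ALG}_4 \le o_2 + 3\,o_3 + (7/2+O(\varepsilon))\,o_4 + n_{5+},
\]
and averaging with weights $2/3$ and $1/3$ produces
\[
E[\mathrm{ALG}] \le o_2 + \tfrac{23}{9}\,o_3 + \tfrac{23}{6}\,o_4 + n_{5+} + O(\varepsilon)(o_3+o_4).
\]
Since $\mathrm{OPT}=o_2+2\,o_3+3\,o_4+(n_{5+}-o_{5+})$, direct calculation gives
\[
\tfrac{23}{18}\,\mathrm{OPT}-E[\mathrm{ALG}]\;\ge\;\tfrac{5}{18}\,o_2+\tfrac{5}{18}\,n_{5+}-\tfrac{23}{18}\,o_{5+}-O(\varepsilon)\,\mathrm{OPT},
\]
which is at least $-O(\varepsilon)\mathrm{OPT}$ because $n_{5+}\ge 5\,o_{5+}$ forces $\tfrac{5}{18}n_{5+}-\tfrac{23}{18}o_{5+}\ge 0$. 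Hence $E[\mathrm{ALG}]\le(23/18+O(\varepsilon))\,\mathrm{OPT}$.

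The main obstacle is the ``interference'' case in which step~1's maximum matching pairs items drawn from distinct OPT parts of size $\ge 3$, breaking those parts and possibly shrinking the packings available to step~2 below $o_3$ or $o_4$. I would dispose of this via an exchange argument: every extra step~1 pair obtained by interference consumes two items from two distinct OPT triples (or quadruples), and the residual items of the broken parts still sum to zero and assemble into a new zero-sum set (typically a quadruple captured by step~2b, otherwise absorbed into the step~3 big set). A careful bookkeeping shows that each such exchange does not increase the algorithm's expected cost, so the interference-free configuration governs the ratio and the bound above suffices.
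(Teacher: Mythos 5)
Your counting is essentially the paper's: you split OPT by part size, use the Hurkens--Schrijver guarantees to bound the $3$-packing by $(2/3-O(\varepsilon))o_3$ and the $4$-packing by $(1/2-O(\varepsilon))o_4$, average the two branches with weights $2/3$ and $1/3$, arrive at $E[\mathrm{ALG}] \le o_2 + \tfrac{23}{9}o_3 + \tfrac{23}{6}o_4 + n_{5+}+O(\varepsilon)(\cdot)$, and close the ratio with the observation that the coefficient ratios are maximized at $23/18$. All of that matches the paper.

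The gap is exactly the issue you flag at the end and then defer. Your irreducibility normalization does not prevent interference: items from two irreducible OPT triples $\{a,b,c\}$ and $\{d,e,f\}$ can still satisfy $\Delta_a=-\Delta_d$, so step~1's maximal matching may pair $a$ with $d$ and leave step~2 with fewer than $o_3$ intact triples to pack. Your proposed fix argues about how such interference affects the \emph{algorithm's} cost (``each such exchange does not increase the algorithm's expected cost''), which is the awkward direction and is left as a sketch. The paper dispatches this cleanly and in the opposite direction: it proves as a preliminary step that \emph{OPT can be assumed, without loss of generality, to contain exactly the algorithm's step-1 singletons and pairs.} Concretely, if step~1 selects a complementary pair $\{i,j\}$ that OPT splits across two parts $S$ and $T$, replace $S$ and $T$ by $\{i,j\}$ and $S\cup T\setminus\{i,j\}$; this is a zero-sum repartition of the same cost. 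Iterating aligns OPT's singletons and pairs with step~1, after which there is no interference at all, and the size profile $(p_k)$ you need for the rest of the calculation is that of the modified OPT. Without this reduction (or some worked-out substitute for your ``careful bookkeeping''), the claim that the packing found in step~2 has size at least $(2/3-O(\varepsilon))o_3$, resp.\ $(1/2-O(\varepsilon))o_4$, is not justified, because the packings are computed on the post-step-1 residual set, which need not contain all of OPT's triples and quadruples intact.
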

\begin{proof}
First, observe that the partitioning returned by the algorithm is a valid partitioning, i.e., every item belongs to exactly one set and the sum of $\Delta_k$'s in every set equals 0. We show that the first step of the algorithm is optimal. That is, there exists an optimal solution that contains exactly the same set of singletons and pairs as 
in the partition returned by the algorithm. Suppose that the optimal solution breaks one pair $\{i,j\}$ ($\Delta_i= - \Delta_j$) and puts $i$ in $S$ and $j$ in $T$. Then we can replace sets $S$ and $T$ with two new sets $\{i,j\}$ and $S \cup T \setminus \{i,j\}$. The new solution has the same cost as before; the sum of $\Delta_k$'s in every set is 0, but the pair $\{i,j\}$ belongs to the partitioning. Repeating this procedure several times, we can transform an arbitrary optimal solution into an optimal solution that contains the same set of singletons and pairs as the solution obtained by the approximation algorithm.

For the sake of the presentation let us assume that $\varepsilon = 0$ (that is, we assume that the approximation algorithms due to Hurkens and Schrijver~\cite{HS}, we use in our algorithm, have approximation guarantees at most
$3/2$ and $2$). Let $p_k$ be the number of sets of size $k$ in the optimal solution. The cost of the optimal solution is $p_2 + 2 p_3 + 3 p_4 + 4p_5 + \cdots$, because the objective function charges $|S|-1$ to a set of size $|S|$. Our approximation algorithm also finds $p_1$ singleton sets and $p_2$ pairs. Then with probability $2/3$, it finds $s_3\geq (2/3) p_3$ triples and covers the remaining $3\cdot (p_3 - s_3) + 4p_4 + 5p_5 + \cdots$ vertices with one set; and with probability $1/3$, it finds $s_4\geq p_4/2$ quadruples and covers the remaining  $3p_3 + 4 \cdot (p_4 - s_4) + 4p_4 + 5p_5+\cdots$ vertices with one set. Thus the expected cost of the solution returned by the algorithm equals
\begin{multline}
\frac{2}{3}\Big(p_2 + 2 \cdot \frac{2 p_3}{3} + 3\cdot \frac{p_3}{3} + 4p_4 + \sum_{k \geq 5}k p_k - 1\Big) + 
\frac{1}{3}\Big(p_2 + 3 \cdot \frac{p_4}{2} + 3 p_3 + 4\cdot \frac{p_4}{2} + \sum_{k \geq 5}k p_k - 1\Big) \\=
p_2 + \frac{23}{9} p_3 + \frac{23}{6}p_4 + \sum_{k \geq 5}k p_k-1.
\end{multline}
Therefore, the approximation ratio of the algorithm, assuming that $\varepsilon = 0$, is 
$$
\frac{p_2 + \frac{23}{9} p_3 + \frac{23}{6}p_4 + \sum_{k \geq 5}k p_k-1}{p_2 + 2p_3 + 3p_4 + \sum_{k \geq 5}(k-1) p_k}
\le \max \left\{\frac{1}{1}, \frac{\frac{23}{9}}{2}, \frac{\frac{23}{6}}{3}, \frac{5}{4}, \frac{6}{5}, \dots\right\} = \frac{23}{18}.
$$
It is easy to verify that if $\varepsilon > 0$, the approximation ratio of the algorithm is at most $23/18 + O(\varepsilon)$. 
\end{proof}

We now prove that finding the exact solution of the problem is NP-hard.

\begin{lemma}
The zero-weight partition problem is NP-hard.
\end{lemma}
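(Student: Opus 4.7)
My plan is to prove NP-hardness by reduction from the strongly NP-hard 3-PARTITION problem: given $3m$ positive integers $a_1,\ldots,a_{3m}$ with $B/4 < a_i < B/2$ summing to $mB$, decide whether they can be partitioned into $m$ triples each summing to $B$.

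Given such an instance, I would construct a Zero-Weight Partition instance whose items have values $a_1,\ldots,a_{3m}$ together with $m$ additional copies of $-B$. The total weight is $mB + m(-B) = 0$, so the instance is valid, and there are $4m$ items in all. The claim I would establish is that this instance admits a partition into $m$ zero-sum parts if and only if the original 3-PARTITION instance is a YES-instance. Since the Zero-Weight Partition objective $(n+1) - M$ is minimized exactly when the number of parts $M$ is maximized, this suffices to transfer hardness.

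The forward direction is essentially free: each triple $\{a_i,a_j,a_k\}$ from a valid 3-partition, combined with a distinct $-B$ item, forms a zero-sum quadruple, giving $m$ parts in total. For the converse, I would argue in two short steps. First, because every $a_i$ is strictly positive, any zero-sum subset of the constructed instance must contain at least one $-B$ item; thus achieving $m$ parts at all forces each part to contain exactly one $-B$ item, together with a subset of $a_i$'s summing to exactly $B$. Second, the size constraint $B/4 < a_i < B/2$ rules out subsets of $a_i$'s of size $\le 2$ (sum too small) and of size $\ge 4$ (sum too large), so each such subset has cardinality exactly three. This directly produces a valid 3-partition.

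I do not anticipate any substantive obstacle here. The reduction is polynomial even in the unary encoding, so it actually yields strong NP-hardness. The only point that needs to be stated cleanly is the cardinality argument in the converse direction, which is the sole place where the careful interval bounds of 3-PARTITION are used; everything else is immediate from the construction.
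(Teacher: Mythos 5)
Your proposal is correct and matches the paper's reduction essentially verbatim: the same construction ($3m$ items with values $b_k$ plus $m$ items of value $-B$), the same forward direction (each 3-partition triple merged with one $-B$), and the same use of the $B/4 < a_i < B/2$ bounds to force each zero-sum part to consist of exactly one $-B$ together with exactly three $a_i$'s. Your added remark that the reduction is polynomial even in unary (hence yields strong NP-hardness) is a small but correct bonus not spelled out in the paper.
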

\begin{proof}
We construct a reduction from the classical NP-complete {\sc 3-Partition}  to the zero-weight partition problem. Recall that in {\sc 3-Partition} we are given $3m$ numbers $b_1,\ldots, b_{3m}$ strictly between $B/4$ and $B/2$ and we need to check if the set can be partitioned into $m$ sets such that the sum of all elements in each set equals $B$ (and hence each set must have size 3). Such a partition is a ``3 partition.''  
Given an instance of {\sc 3-Partition}, we create $3m$ vertices each having weight $\Delta_k = b_k$. Then we create $m$ vertices each with weight $\Delta_k = -B$. It is easy to see that every set of weight zero must have at least four elements; moreover if the set contains exactly four elements then one of the elements equals $-B$ and 
the other three sum up to $B$. Thus a $3$ partition exists in the original problem if an only if the vertices in the new problem can be partitioned into $m$ zero-weight sets, i.e., the value of the new problem is $4m - m = 3m$.
\end{proof}
\begin{corollary}
One-dimensional {\sc AllRects} is NP-hard.
\end{corollary}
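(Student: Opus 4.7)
The corollary follows almost immediately from the material developed earlier in this section, together with the preceding NP-hardness result for \textsc{Zero-Weight Partition}. My plan is to exhibit a polynomial-time reduction from \textsc{Zero-Weight Partition} to one-dimensional \textsc{AllRects}, leveraging the equivalence already established.

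Recall that in the discussion leading up to \eqref{eq:DeltaOneD} the authors showed that a one-dimensional \textsc{AllRects} instance $(a_1,\ldots,a_n)$ (padded with $a_0 = a_{n+1} = 0$) has the same optimum value $(n+1) - M^*$ as the \textsc{Zero-Weight Partition} instance on the vector $(\Delta_0,\ldots,\Delta_n)$ with $\Delta_k = a_{k+1}-a_k$. My reduction simply runs this correspondence in reverse. Given an instance $(\Delta_0,\Delta_1,\ldots,\Delta_n)$ of \textsc{Zero-Weight Partition} with $\sum_{k=0}^{n}\Delta_k = 0$, I define the prefix sums
\[
a_k \;=\; \sum_{i=0}^{k-1} \Delta_i \qquad (k = 1,\ldots,n),
\]
which automatically satisfy $a_{k+1} - a_k = \Delta_k$ for $k = 1,\ldots,n-1$, and (using $a_0 := 0$) also at $k = 0$, while $a_{n+1} := \sum_{i=0}^{n}\Delta_i = 0$ matches $\Delta_n = -a_n$. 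This yields a one-dimensional \textsc{AllRects} instance computable in time polynomial in the input size (the numbers $a_k$ are bounded by the sum of absolute values of the $\Delta_i$'s).

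By the equivalence already proved in the section, the optimum number of intervals for this \textsc{AllRects} instance equals $(n+1) - M^*$, where $M^*$ is the maximum number of parts in a zero-weight partition of $\{\Delta_0,\ldots,\Delta_n\}$. Consequently, an optimal (or even decision-version) solver for one-dimensional \textsc{AllRects} would, in polynomial time, determine $M^*$, solving \textsc{Zero-Weight Partition}. Since the previous lemma shows that \textsc{Zero-Weight Partition} is NP-hard, one-dimensional \textsc{AllRects} is NP-hard as well.

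There is no real obstacle here: the equivalence between the two problems is the content of the analysis of the 1-D problem done earlier in the section, and the only thing to check is that the backward direction of that correspondence is a genuine Karp reduction (polynomial size, polynomial time). The only minor subtlety worth mentioning is that the reduction preserves not just feasibility but the exact objective value, so it actually gives a polynomial-time equivalence rather than merely NP-hardness; this also implies that any approximation algorithm for one-dimensional \textsc{AllRects} carries over, with the same ratio on the objective $(n+1) - M$, to \textsc{Zero-Weight Partition}.
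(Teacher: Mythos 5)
Your proposal is correct and matches the paper's (implicit) argument: the paper states the corollary without proof, relying on the bidirectional equivalence between one-dimensional \textsc{AllRects} and \textsc{Zero-Weight Partition} established earlier in the section, and your prefix-sum construction is exactly the reverse direction of that correspondence. The one thing you make explicit that the paper leaves unsaid — that the map $(\Delta_0,\dots,\Delta_n)\mapsto(a_1,\dots,a_n)$ is a polynomial-time Karp reduction preserving the objective value — is a worthwhile clarification but not a different proof.
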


\subsection{The 2-Dimensional Case}
We now consider the 2-dimensional case. We are given an $m \times n$ matrix $A=(a_{ij})$ ($1\leq i \leq m$, $1\leq j \leq n$) and we need to cover it with the minimum number of weighted rectangles $Rect (i_1,i_2,j_1,j_2)$ (for 
arbitrary $i_1,i_2,j_1,j_2$);  
we use  $w(i_1,i_2,j_1,j_2)$ for the weight of $Rect(i_1,i_2,j_1,j_2)$. 
We assume that $a_{ij} = 0$ for $i$ and $j$ outside the rectangle $\{1,\dots, m\}\times
\{1,\dots,n\}$.

By analogy to the 1-dimensional case, define 
$\Delta_{ij} = a_{i,j} - a_{i,j+1} +a_{i+1,j+1}- a_{i+1,j}.$
Call a pair $(i,j)$ with $0\le i\le m$, $0\le j\le n$, with $\Delta_{ij}\ne 0$ an {\em array corner}.
Imagine that the matrix is written in an $m\times n$ table, and $\Delta_{ij}$'s are written at the grid nodes.
The key point is that every rectangle covers exactly one, two, or four of the cells 
$(i+1, j+1)$, $(i,j)$, $(i,j+1)$, $(i+1,j)$ bordering a grid point, and that those covering two or four of those cells cannot affect
$\Delta_{ij}$.
This means that only rectangles having a corner at the intersection of the $i$th and $j$th
grid line contribute to  $\Delta_{ij}$.   (This is why the definition of $\Delta_{ij}$ was ``by analogy'' to the 1-d case.)
In other words,
\begin{eqnarray}\label{eq:Delta2D_Weights}
\Delta_{ij} &=& \sum_{i_2\geq i+1 \mbox{ and }j_2 \geq j + 1} w(i+1,i_2,j+1, j_2) + 
\sum_{i_1\leq i \mbox{ and }j_1 \leq j_2} w(i_1,i,j_1, j)\\
\notag
&\phantom{=}& -\sum_{i_2\geq i+1 \mbox{ and } j_1 \leq j} w(i+1,i_2,j_1,j) -
\sum_{i_1\leq i \mbox{ and }j_2 \geq j+1} w(i_1,i,j+1,j_2).
\end{eqnarray}
This means that the number of rectangles in the optimal solution
must be at least one quarter of the number of array corners, the ``one-quarter"
arising from the fact that each rectangle has exactly four corners and can hence be responsible for at most 
four of the array corners.

It is easy now to give a 4-approximation algorithm, which we sketch without proof, based on this observation.
Build a matrix $M$, initially all zero, which will eventually equal the input matrix $A$.
Until no more array corners exist in $A-M$, find an array 
corner $(i,j)$ with $i<m$ and $j<n$.    (As long as array corners exist, there must be one with $i<m$ and $j<n$.)  
Let $\Delta\ne 0$ be $\Delta_{ij}$.
Add to $M$ a rectangle of weight $\Delta$ with upper left corner 
at $(i,j)$ and extending as far as possible to the right and downward, eliminating the array corner at $(i,j)$ in $A-M$.  

It is easy to see that (1) when the algorithm terminates, $M=A$, and that (2) the number of rectangles used is at most 
the number of array corners in $A$, and hence at most $4|OPT_2(A)|$.

Now we give, instead, a more sophisticated, $23/9+\varepsilon<2.56$-approximation algorithm for the 2D problem.
The idea is to make more efficient use of the rectangles.  Instead of using only one corner of each (in contrast to the adversary,
who might use all four), now we will use two.  In fact, we will deal separately with different horizontal (between-consecutive-row) grid lines, using 
a good 1-dimensional approximation algorithm to decide how to eliminate the array corners on that grid line.  Every time 
the 1-d algorithm tells us to use an interval $[j_1,j_2]$, we will instead inject a rectangle which 
starts in column $j_1$ and ends in column $j_2$,
and extends all the way to the bottom.  Because we use 2 of each rectangle's 4 corners, we pay a price of a factor 
of $4/2$ over the 1-d approximation ratio of
$23/18+O(\varepsilon)$.  Hence we will get $23/9+O(\varepsilon)$.

Here are the details.  Fix $i$ and consider the restriction of the zero-weight partition problem to the $i$th horizontal grid line, i.e., the 1-dimensional zero-weight partition problem with $\Delta_{j} = \Delta_{ij}$. Denote by $OPT^i$ the cost of the optimal solution.
The number of rectangles touching the $i$th horizontal grid line
from above or below is at least $OPT^i$, 
since only these
rectangles contribute $\Delta_{ij}$'s. Every rectangle touches only two horizontal grid lines, thus the total number of rectangles is at least $\sum_{i=1}^m OPT^i/2$.

All rectangles generated by our algorithm will touch the bottom
line of the table;  that is why we lose a factor of 2. Note that
if we could solve the 1-dimensional problem exactly we would be
able to find a covering with $\sum_{i=1}^m OPT^i$ rectangles and
thus get a 2 approximation. For each horizontal grid line $i$,
the algorithm solves the 1-dimensional problem (with $\Delta_{j} = \Delta_{ij}$)
and finds a set of intervals $[j_1,j_2]$ with weights $w_{j_1j_2}$. These intervals are the top sides of the rectangles
generated by the algorithm. All bottom sides of the rectangles lie on the bottom grid line of the table. That is, for every interval $[j_1,j_2]$ the algorithm adds the rectangle $Rect(i,m,j_1,j_2)$ to the solution and sets its weight $w(i,m,j_1,j_2)$ to be $w_{j_1j_2}$. 

The total number of rectangles in the solution output  by the algorithm is 
$\sum_{i=1}^m ALG_i$, where $ALG_i$ is the cost of the solution of the 1-dimensional problem.
 Thus the cost of the solution is at most $2\cdot (23/18 +
O(\varepsilon))$ times
the cost of the optimum solution. We now need to verify that the set of rectangles output by the algorithm is indeed is
a solution.

Subtract the weight of each rectangle from all $a_{ij}$'s covered by the rectangle.
We need to prove that the residual matrix 
$$a'_{ij} = a_{ij} - \sum_{i_1,j_1,j_2: (i,j) \in Rect(i_1,m,j_1,j_2)} w(i_1,m,j_1,j_2)$$
equals zero. Observe that $\Delta'_{ij} = a'_{i+1,j+1} + a'_{ij} - a'_{i+1,j} - a'_{i,j+1} = 0$ for 
all $0 \leq i \leq m-1$ (i.e., all rows $i$, possibly, except for the bottom line) and $0 \leq j \leq n$. Assume that not all $a'_{ij}$ equal to 0. Let
$a'_{i_0j_0}$ be the first nonzero $a'_{ij}$
with respect to the lexicographical order on $(i,j)$. Then $a'_{i_0-1,j_0-1} = a'_{i_0-1,j_0} = a'_{i_0,j_0-1} = 0$. Thus
$a'_{i_0j_0} = 0$. 

We have proven the following theorem.

\begin{theorem}
For every positive $\varepsilon$, there exists a polynomial-time approximation algorithm for {\sc AllRects} 
with approximation guarantee at most $23/9 + O(\varepsilon)$, with $23/9=2.5555....$.
\end{theorem}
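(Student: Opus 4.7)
The plan is to reduce the 2D problem to $m$ independent 1D zero-weight partition instances, one per horizontal grid line, and then invoke the $(23/18+O(\varepsilon))$-approximation from the previous subsection.

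First I would define, analogously to the 1D setting, the discrete Laplacian
$$\Delta_{ij} = a_{ij} - a_{i,j+1} + a_{i+1,j+1} - a_{i+1,j},$$
(extending $A$ by zeros off the $m\times n$ grid), and observe the key structural fact already recorded in equation \eqref{eq:Delta2D_Weights}: only rectangles whose corner sits at the grid point $(i,j)$ contribute to $\Delta_{ij}$, because a rectangle covering two or four of the cells around $(i,j)$ cancels out in the alternating sum. Consequently, fixing a horizontal grid line $i$ and letting $OPT^i$ denote the optimum of the 1D zero-weight partition instance with input $\Delta_j := \Delta_{ij}$, every rectangle in any feasible 2D solution touches exactly two horizontal grid lines, so the number of rectangles in an optimal 2D solution is at least $\tfrac{1}{2}\sum_{i=0}^{m}OPT^i$.

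Next I would describe the algorithm: for each $i \in \{0,1,\dots,m-1\}$, run the $(23/18+O(\varepsilon))$-approximation of the previous subsection on the 1D instance $(\Delta_{ij})_j$, obtaining a set of weighted intervals $[j_1,j_2]$ with weights $w_{j_1 j_2}$; for each such interval output the rectangle $Rect(i+1,m,j_1,j_2)$ of weight $w_{j_1 j_2}$. Since each emitted rectangle reaches the bottom of the matrix, it uses two of its four corners to resolve array corners on grid line $i$, costing the promised factor of two relative to the 1D ratio. The total number of rectangles output is $\sum_{i} ALG_i \le (23/18+O(\varepsilon))\sum_i OPT^i \le 2(23/18+O(\varepsilon))\cdot|OPT_2(A)| = (23/9+O(\varepsilon))\cdot |OPT_2(A)|$.

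The main obstacle — and the only step that isn't an immediate accounting argument — is verifying correctness: that the residual matrix $a'_{ij} = a_{ij}-\sum w(i_1,m,j_1,j_2)\cdot\mathbf{1}[(i,j)\in Rect(i_1,m,j_1,j_2)]$ is identically zero. For this I would argue by induction in lexicographic order on $(i,j)$. By construction, the 1D solution on each grid line $i<m$ satisfies the $\Delta$-equations on that line, which translates to $\Delta'_{ij} = a'_{i+1,j+1}+a'_{ij}-a'_{i+1,j}-a'_{i,j+1}=0$ for every $0\le i\le m-1$ and every $j$. Taking the lexicographically smallest $(i_0,j_0)$ with $a'_{i_0 j_0}\ne 0$, all three neighbors $a'_{i_0-1,j_0-1},a'_{i_0-1,j_0},a'_{i_0,j_0-1}$ vanish, so $\Delta'_{i_0-1,j_0-1}=0$ forces $a'_{i_0 j_0}=0$, a contradiction. (The boundary cases where $i_0=0$ or $j_0=0$ follow from the convention that $A$ is extended by zeros.) Together with the ratio calculation above, this establishes the theorem.
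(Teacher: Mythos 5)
Your proposal is correct and follows essentially the same route as the paper: reduce to independent 1D zero-weight-partition instances on the horizontal grid lines, lower-bound $|OPT_2(A)|$ by $\tfrac12\sum_i OPT^i$ since each rectangle meets exactly two horizontal grid lines, drop each output interval to the bottom of the matrix, and verify feasibility via the lexicographically-first-nonzero residual argument. The only (cosmetic) deviation is that you write the emitted rectangle as $Rect(i+1,m,j_1,j_2)$ where the paper writes $Rect(i,m,j_1,j_2)$; your indexing is actually the one consistent with the paper's own equation~\eqref{eq:Delta2D_Weights}.
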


\subsection{A Simplified Algorithm}
Because of the dependence on $\varepsilon$, the running time of the previous algorithm can be large
when $\varepsilon$ is small.  A simpler algorithm for the 1-dimensional case---namely, just use pairs and triples---can 
be shown to give ratio $4/3$ for the 1-d case, and hence $8/3=2.6666...$ in 2-d, only slightly worse than $23/9$.  
For the simplified 1-d algorithm, the running time is $O(n+k^2 \log k)$, if there are $k$ $\Delta$'s.  
To run the 2-d algorithm,
the running time becomes $O(n^2+\sum_{i=1}^n k_i^2 \log k_i)$, where there are $k_i$ corners on the $i$th row. 
Since the number of corners is $\Theta(OPT)$, the running time is at most $O(n^2)$ plus $O(\max_{k_1+k_2+\cdots+k_n=OPT}
\sum_i k_i^2 \log k_i)$.  Since $f(x)=x^2 \log x$ is convex, this quantity is maximized by making as many $k_i$'s equal to 
$n$ as possible.  A simple proof then shows that the time is $O(n^2+OPT\cdot(n \log n))$.

\section{Acknowledgment}
We thank Divesh Srivastava for initial conversations which inspired
this work.

\appendix
\section{Proof of Theorem~\ref{thm:algmain}}
In the main part of the paper we proved that the expected cost of the solution
returned by the algorithm is at most $4|OPT_2(A)|$. We now  improve this bound to $2|OPT_2(A)|$.

\begin{proof}
We have shown (see bounds~\eqref{eq:TotalBound} and~\eqref{eq:SumU}) that the expected cost of the solution is bounded by
$$\E{\sum_{v\in path(root)} OPT_1 (V_v)}  + 2\sum_{u\neq root}\alpha_u\E{\sum_{v\in path(u)} OPT_1 (V_v) }.$$
Write
$$\E{\sum_{v\in path(root)} OPT_1 (V_v)} + 2\sum_{u\ne root}\alpha_u\E{\sum_{v\in
path(u)} OPT_1 (V_v)} $$
$$= \sum_{v} OPT_1 (V_v) \cdot \left(\Prob{v \in path(root)} + 2\sum_{u\neq root}
\alpha_{u} \Prob{v \in path(u)}\right).$$

Fix a node $v\ne root$. Let $p^0(v)=v$; let $p^1(v)= p(v)$ be the parent of $v$; let $p^2(v) = p(p(v))$ be the grandparent, etc. Finally, let $p^{k}(v)$, say, be the root, $k$ depending implicitly on
$v$. Node $p^0(v) = v$ belongs to $path(v)$ with probability $1$; $v$ belongs to the $path (p^1(v))$ with probability
 $1/d(p^1(v))$; it belongs to $path (p^2(v))$ with probability  $1/(d(p^1(v))d(p^2(v)))$, etc. It belongs to $path(u)$ with probability 0 if $u$ is
not an ancestor of $v$.  Thus
\begin{eqnarray*}
\Prob{v \in path(root)} + 2\sum_{u\neq root} \alpha_{u} \Prob{v \in path(u)} &=& 
\frac{1}{d(p^1(v))d(p^2(v))\cdots d(p^k(v))} \\&\phantom{=}&+ 2\sum_{i=0}^{k-1} \frac{\alpha_{p^{i}(v)}}{d(p^1(v))d(p^2(v))\cdots d(p^i(v))}
\end{eqnarray*}
Substituting
$$\alpha_{p^{i}(v)} = \frac{d(p(p^i(v))) - 1}{d(p(p^i(v)))} = \frac{d(p^{i+1}(v)) - 1}{d(p^{i+1}(v))},$$
we get a telescoping sum
\begin{align*}
\Pr\Big(&v \in path(root)\Big) + 2\sum_{u\neq root} \alpha_{u} \Prob{v \in path(u)} \\
& = \frac{1}{d(p^1(v))d(p^2(v))\cdots d(p^k(v))} \\
&\phantom{=\dots}+ 2\biggl[\frac{d(p^1(v)) - 1}{d(p^1(v))} \cdot 1+ \frac{d(p^2(v)) - 1}{d(p^2(v))}
  \cdot \frac{1}{d(p^1(v))} + \frac{d(p^3(v)) - 1}{d(p^3(v))} \cdot \frac{1}{d(p^1(v))d(p^2(v))} 
  + \cdots \biggr]\\
&=
\frac{1}{d(p^1(v))d(p^2(v))\cdots d(p^k(v))}\\ 
&\phantom{=\dots} + 2\biggl[\big(1-\frac 1 {d(p^1(v))}\big) +\Big(\frac 1 {d(p^1(v))}-
\frac {1}{d(p^1(v)d(p^2(v))}\Big)\\
&\phantom{=\dots} +\Big(\frac 1 {d(p^1(v))d(p^2(v))}-\frac 1 {d(p^1(v))d(p^2(v))d(p^3(v))}\Big)+\cdots\\
&\phantom{=\dots} +\Big(\frac 1 {d(p^1(v))d(p^2(v))\cdots d(p^{k-1}(v))}-\frac 1
{d(p^1(v))d(p^2(v))\cdots d(p^k(v))}\Big)\biggr]\\
&=2 - \frac{1}{d(p^1(v))d(p^2(v))\cdots d(p^k(v))} < 2.
\end{align*}

Thus
$$\E{\sum_{v\in path(root)} OPT_1 (V_v)}  + 2\sum_{u\neq root}\alpha_u\E{\sum_{v\in path(u)} OPT_1 (V_v) } 
\le \sum_v OPT_1(V_v)\cdot 2.$$

We have proven that the algorithm finds a 2 approximation.  
\end{proof}

\section{NP-hardness of {\sc Tree\texorpdfstring{$\times$}{ x }Tree}}
In this section we sketch a proof that {\sc Tree$\times$Tree} is NP-hard. We show that the problem
is NP-hard even if each of the trees is a star. We construct a reduction from the Directed Hamiltonian Path problem. Let $G=(V,E)$ be a directed graph. Fix a parameter $M = (10 \max\{|V|,|E|\})^4$. For every vertex $v$, we define $M$ rows of our matrix, which we 
denote $R_1(v),\dots, R_M(v)$.
For every directed edge $(u,v)$, we define $M$ columns of our matrix, which we 
denote $C_1(uv),\dots, C_M(uv)$. Thus our matrix has dimensions $(M\cdot |V|) \times (M\cdot|E|)$.
The trees are stars, thus allowed rectangles are the whole matrix, individual rows, individual columns and
individual cells. In our example the gap between the values of ``yes'' and ``no'' instances will be larger than the number of rows plus the number of columns. Thus, we may assume that rectangles corresponding to columns and rows are free to use. In this case, we may also assume that 
the weight of the rectangle covering the whole matrix is 0 (instead of having this rectangle with weight $w$ in the solution we may just increase the value of all columns by $w$). Denote by $x_i(z)$ the variable for the rectangle corresponding to row $R_i(z)$ (possibly 0); denote by $y_j(uv)$ the variable
for the rectangle corresponding to column $C_j(uv)$; denote the entry of the matrix at the intersection of the row $R_i(z)$ and the column  $C_j(uv)$ by
$a_{ij}(z,uv)$. Then the cost of the solution equals the number of individual cells with nonzero weight, i.e., the number of unsatisfied equations 
$$x_i(z) + y_{j} (uv) = a_{ij}(z,uv).$$
Thus the problem is to find values of variables $x_i(z)$ and $y_{j} (uv)$ so as to minimize 
the number of unsatisfied equations. Remember, however, that we need to guarantee a gap of at least 
$M\cdot |V| + M\cdot|E|$ between the values of ``yes'' and ``no'' instances.

We set $a_{ij}(u, uv) = 0$ for every vertex $u$ and  every edge $(u,v)$. We set $a_{ij}(v, uv) = 1$ for every vertex $v$ and every edge $(u,v)$.  We call the rest of the matrix entries, i.e., entries $a_{ij}(z,uv)$, where $z\ne u$ and $z\ne v$, ``bad entries.'' Let us pretend for a while that there are no bad entries and that there are no equations corresponding to bad entries.
(Later we will set $a_{ij}(z,uv)=ij$.)

We claim that if the graph has a directed Hamiltonian path then there exists a solution 
with at most $(|E| - |V| + 1) \cdot M^2$ unsatisfied equations. Let $pos(u)$ be the position of the vertex in
the Hamiltonian path: 1st, 2nd, 3rd, etc. Then we set $x_i(u) = pos (u)$ and
$y_j(uv) = -pos(u)$. Observe that if an edge $(u,v)$ belongs to the Hamiltonian path,
then 
$$x_i(u) + y_j(uv) = pos(u) - pos(u) = 0 = a_{ij}(u,uv)$$
and 
$$x_i(v) + y_j(uv) = (pos(u) + 1) - pos(u) = 1 = a_{ij}(v,uv).$$
If an edge $(u,v)$ does not belong to the Hamiltonian path,
then still
$$x_i(u) + y_j(uv) = pos(u) - pos(u) = 0 = a_{ij}(u,uv),$$
but
$$x_i(v) + y_j(uv) = pos(v) - pos(u) \neq 1 = a_{ij}(v,uv).$$
The number of unsatisfied equations thus equals 
$M^2 \cdot (|E|- |V| + 1)$.

Now we show that if the graph does not have a directed Hamiltonian path, then
every solution has cost at least $M^2 \cdot (|E|- |V| + 2)$. Assume to the contrary,
that there exists a solution of cost less than $M^2 \cdot (|E|- |V| + 2)$.
Since all variables $x_i(u)$ for a fixed $u$ and $i=1,\dots, M$  participate in exactly the same equations
we may assume that $x_i(u) = x_j(u)$ for all $i$ and $j$ in the optimal solution. Similarly,
we may assume that $y_{i}(uv) = y_{j}(uv)$ for all $i$ and $j$. (Recall that we now ignore all
bad equations.) If $x_i(z) + y_j(uv) = a_{ij}(z,uv)$ ($z=u$ or $z=v$), then
the same equality holds for every $i$ and $j$. Thus, the number of unsatisfied equations
is at most $M^2 \cdot (|E|- |V| + 1)$ (since the number of unsatisfied equations is divisible
by $M^2$). 
Consider an edge $(u,v)$ for which 
$x_i(u) + y_j(uv) = 0$ and $x_i(v) + y_j(uv) = 1$. We have $x_i(v) - x_i(u) = 1$. 
The number of such edges is at least $|V|-1$ (since the number 
of edges for which  $x_i(u) + y_j(uv) \neq 0$ or  $x_i(v) + y_j(uv) \neq 1$ is at most
the total number of unsatisfied equations divided by $M^2$, i.e., $|E|- |V| + 1$, and 
the total number of edges is $|E|$). Therefore, if we place vertex $u$ at position 
$x_i(u)+(1-\min_s x_i(s))$ (recall that $x_i(u)$ does not depend on $i$) we get a Hamiltonian path.

We are almost done. We only need to take care of bad equations. The idea is to set the rest of values
$a_{ij}(z,uv)$ so that only very few bad equations can be satisfied. For each $z$ and every edge $(u,v)$ we define an $M \times M$ matrix  $a_{ij}(z,uv) = ij$. We claim that in every matrix 
$a_{ij}(\cdot, \cdot)$, the number of satisfied equations is at most $3M^{3/2}$. We prove the claim in Lemma~\ref{lem:numbadentries}.
Then for every assignment of variables $x_i(z)$ and $y_{j}(uv)$, the total number of satisfied bad equations is at most $|E|\cdot|V|\cdot 3M^{3/2} < M^2/2$. Hence, the gap between ``yes" and ``no"
instances is at least $M^2/2$.

\begin{lemma}\label{lem:M32}
Consider an $M\times M$ matrix $a_{ij}$ of zeros and ones. Suppose that for every $i_1$, $i_2$, $j_1$ and $j_2$ ($i_1\ne i_2$ and $j_1 \ne j_2$) at most three out of four of values $a_{i_1j_1}$,
$a_{i_1j_2}$, $a_{i_2j_1}$, $a_{i_2j_2}$ equal 1. Then the number of ones in the matrix is 
at most $3M^{3/2}$.
\end{lemma}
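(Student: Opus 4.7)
The hypothesis is equivalent to saying that the 0/1 matrix contains no $2\times 2$ all-ones submatrix: if four entries at positions $(i_1,j_1),(i_1,j_2),(i_2,j_1),(i_2,j_2)$ were all equal to $1$, they would constitute four 1's among a rectangle's corners, contradicting the assumption. So this is a textbook Zarankiewicz/K\H{o}v\'ari--S\'os--Tur\'an situation, and I plan to run exactly that double counting argument.

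First, let $r_i$ denote the number of $1$'s in row $i$, and let $N=\sum_{i=1}^M r_i$ be the total number of $1$'s; we want $N\le 3M^{3/2}$. Count pairs $(\{j_1,j_2\},i)$ such that $a_{ij_1}=a_{ij_2}=1$. On the one hand this count is $\sum_{i=1}^M \binom{r_i}{2}$. On the other hand, for each unordered pair $\{j_1,j_2\}$ of distinct columns the no-$2\times 2$-ones assumption forces at most one row $i$ to have $1$'s in both columns $j_1$ and $j_2$. Hence
\[
\sum_{i=1}^M \binom{r_i}{2} \;\le\; \binom{M}{2}.
\]

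Expanding and using Cauchy--Schwarz on $\sum r_i^2 \ge N^2/M$ I get
\[
\frac{N^2}{M}-N \;\le\; \sum_{i=1}^M r_i^2 - \sum_{i=1}^M r_i \;=\; 2\sum_{i=1}^M \binom{r_i}{2} \;\le\; M(M-1),
\]
so $N^2 - MN - M^2(M-1)\le 0$. Solving the quadratic gives
\[
N \;\le\; \frac{M+\sqrt{M^2+4M^2(M-1)}}{2}\;\le\; \frac{M+2M^{3/2}}{2} \;=\; M^{3/2}+\frac{M}{2}.
\]

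Finally, for $M\ge 1$ we have $M/2 \le 2M^{3/2}$, so $N \le 3M^{3/2}$, as required. There is no real obstacle here---the only step that needs any care is choosing a slack-free-enough bound for the square root so that the final constant comes out to $3$ rather than something larger, and the computation above already leaves plenty of room (the bound is in fact closer to $M^{3/2}$ for large $M$). The whole proof fits in a few lines and is self-contained given only the hypothesis of the lemma.
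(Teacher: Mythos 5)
Your proof is correct. You've recognized (correctly) that the hypothesis is exactly that the matrix has no $2\times 2$ all-ones submatrix, and you run the classical K\H{o}v\'ari--S\'os--Tur\'an double-counting argument: count pairs of columns covered by a common row, observe each such unordered pair can arise from at most one row, and combine with Cauchy--Schwarz to get a quadratic inequality in $N=\sum_i r_i$, yielding $N\le M^{3/2}+M/2\le 3M^{3/2}$. All steps check out (the discriminant is $M^2(4M-3)\le 4M^3$, so the square root is at most $2M^{3/2}$, and the final slack is generous).

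The paper proves the lemma differently, via a greedy peeling algorithm: while some column contains at least $\sqrt M$ ones, pick such a column $j$ and delete every row with a one in column $j$; the no-$2\times2$ hypothesis implies the $R_t\times M$ block of rows removed at step $t$ holds at most $R_t+(M-1)$ ones, so the removed ones are bounded by $M+M^{3/2}$ overall, and the residual matrix (every column now having fewer than $\sqrt M$ ones) contributes at most $M^{3/2}$ more. Both arguments are elementary and short. Your double-counting route is the textbook Zarankiewicz bound and actually produces a slightly sharper constant ($M^{3/2}+M/2$ versus the paper's $M+2M^{3/2}$), though both comfortably fit under $3M^{3/2}$; the paper's greedy version avoids the quadratic-formula step and is arguably more ``algorithmic in spirit,'' matching the flavor of the rest of the paper, but there is no substantive advantage either way for this application.
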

\begin{proof}
Perform the following algorithm:
While there exists a column containing at least $\sqrt{M}$ ones, pick one such column $j$. Remove all rows $i$ of the at-least-$\sqrt M$ rows
that have 1 at the intersection with column $j$.

When the algorithm stops, the remaining matrix has at most $M^{3/2}$ ones. Let $R_t\geq \sqrt{M}$ be the number of rows removed at step $t$. At every step $t$, the algorithm removes $MR_t$ entries, among which there are at
most $R_t+(M-1)$ ones ($R_t$ ones in the selected column
and at most one in each of the remaining $M-1$ columns, by
hypothesis). Hence, the fraction of removed ones among all removed entries is at most $(R_t+M)/(MR_t)=1/M+1/R_t$. Thus the total number of removed ones is at most $M^2(1/M+1/R_t)\le
M+M^{3/2}$. We get that the total number of ones present in the original matrix is at most $M+M^{3/2}$ plus the at-most-$M^{3/2}$ ones in the
resulting matrix, or at most $M+2M^{3/2}$ in total. 
\end{proof}
\begin{lemma}\label{lem:numbadentries}
Consider a system of linear equations 
$$x_i + y_j = {ij}.$$
For all possible $x_i$ and $y_j$ the number of satisfied equations is at most $3M^{3/2}$.
\end{lemma}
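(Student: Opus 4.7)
The plan is to reduce this lemma directly to Lemma~\ref{lem:M32} via an indicator matrix construction. Given values $x_1,\dots,x_M$ and $y_1,\dots,y_M$, I would define the $M\times M$ $0/1$ matrix $a_{ij} := \ONE(x_i + y_j = ij)$, so that the number of satisfied equations equals the number of $1$'s in the matrix. It then suffices to verify the hypothesis of Lemma~\ref{lem:M32}: for every pair of distinct row indices $i_1\ne i_2$ and distinct column indices $j_1\ne j_2$, the four entries $a_{i_1 j_1}, a_{i_1 j_2}, a_{i_2 j_1}, a_{i_2 j_2}$ cannot all be $1$ simultaneously.

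The key observation, which I expect to be the one-line heart of the proof, is a parallelogram/discrepancy identity. Assume for contradiction that all four of the equations
$$x_{i_1}+y_{j_1}=i_1 j_1,\quad x_{i_1}+y_{j_2}=i_1 j_2,\quad x_{i_2}+y_{j_1}=i_2 j_1,\quad x_{i_2}+y_{j_2}=i_2 j_2$$
hold. Taking the alternating sum (first minus second minus third plus fourth) annihilates the left-hand side, while on the right-hand side it yields $i_1 j_1 - i_1 j_2 - i_2 j_1 + i_2 j_2 = (i_1-i_2)(j_1-j_2)$, which is nonzero since $i_1\ne i_2$ and $j_1\ne j_2$. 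This contradiction establishes that at most three of the four corners can equal $1$.

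With the hypothesis of Lemma~\ref{lem:M32} verified, that lemma directly gives that the total number of $1$'s in $(a_{ij})$ is at most $3M^{3/2}$ (using $M\le M^{3/2}$ to absorb the additive $M$ term arising in the proof of Lemma~\ref{lem:M32} into the bound stated here), which is exactly the claimed bound on the number of satisfied equations. There is essentially no obstacle: the whole content of the lemma is the rank-$1$ structure of the right-hand side $ij$, which forbids a $2\times 2$ all-ones submatrix and lets Lemma~\ref{lem:M32} do the counting work.
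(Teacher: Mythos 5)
Your proof is correct and takes essentially the same route as the paper: both reduce to Lemma~\ref{lem:M32} by noting that the right-hand side $ij$ forbids a $2\times 2$ all-ones pattern, via the identity $(i_1-i_2)(j_1-j_2)\ne 0$.
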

\begin{proof}
Observe that for every $i_1$, $i_2$, $j_1$ and $j_1$ ($i_1\ne i_2$ and $j_1 \ne j_2$), 
it is not possible to satisfy all four equations: $x_{i_1} + y_{j_1} = {i_1j_1}$,
$x_{i_1} + y_{j_2} = {i_1j_2}$, $x_{i_2} + y_{j_1} = {i_2j_1}$, and $x_{i_2} + y_{j_2} = {i_2j_2}$, since if all four of them are satisfied then 
$${i_1j_1} + {i_2j_2} = x_{i_1} + y_{j_1} + x_{i_2} + y_{j_2} = {i_1j_2} + {i_2j_1},$$
but 
$i_1j_1 + i_2j_2 \neq i_1j_2 + i_2j_1$
(since $i_1(j_2 - j_1) \neq i_2(j_2 - j_1)$).
Lemma~\ref{lem:M32} now implies that the number of satisfied equations is at most $3M^{3/2}$.  
\end{proof}
\section{A Running Time Comparison Between The Present Algorithms And Natarajan's} \label{comparison}
Of course it is not fair to compare our algorithms, which
approximately solve the exact problems,
with Natarajan's, which approximately solves the inexact $L_2$ problem.  Of course
the optimal value for our problem, being exact, is at least as large as the optimal value for
Natarajan's problem.  While Natarajan's algorithm is very general, the price paid is that it's slow.

For problem {\sc Tree$\times$Tree}, our algorithm takes time $O(dn^2)$ 
in total, which is $O(d)$ times the input size of $n^2$,
where $d<n$ is the smaller of the depths of the two trees;  typically one expects $d$ to be $O(\log n)$ 
(or constant) in applications.  Natarajan's algorithm takes time
$\Omega(n^4)$ even for each iteration.  

For problem {\sc AllRects}, the contrast between the running times of our algorithm
and Natarajan's is even more stark.  Our simplified
$8/3$-approximation algorithm runs in time $O(n^2+OPT\cdot(n\log n))$
(where the input size is $n^2$)
with $OPT\le n^2$,
whereas Natarajan's
takes time $\Omega(n^6)$ per iteration.  This makes Natarajan's algorithm wildly impractical 
for the large instances which often occur in database applications.  
\end{document}